%% LaTeX Template for ISIT 2024
%%
%% by Stefan M. Moser, October 2017
%% (with some modifications by Tobias Koch, November 2023)
%% 
%% derived from bare_conf.tex, V1.4a, 2014/09/17, by Michael Shell
%% for use with IEEEtran.cls version 1.8b or later
%%
%% Support sites for IEEEtran.cls:
%%
%% http://www.michaelshell.org/tex/ieeetran/
%% http://moser-isi.ethz.ch/manuals.html#eqlatex
%% http://www.ctan.org/tex-archive/macros/latex/contrib/IEEEtran/
%%

\documentclass[conference,letterpaper]{IEEEtran}

%% depending on your installation, you may wish to adjust the top margin:
\addtolength{\topmargin}{9mm}

%%%%%%
%% Packages:
%% Some useful packages (and compatibility issues with the IEEE format)
%% are pointed out at the very end of this template source file (they are 
%% taken verbatim out of bare_conf.tex by Michael Shell).
%
% *** Do not adjust lengths that control margins, column widths, etc. ***
% *** Do not use packages that alter fonts (such as pslatex).         ***
%
\usepackage[utf8]{inputenc} 
\usepackage[T1]{fontenc}
\usepackage{url}
\usepackage{ifthen}
\usepackage{cite}
\usepackage[cmex10]{amsmath}% Use the [cmex10] option to ensure complicance
                             % with IEEE Xplore (see bare_conf.tex)

\usepackage{color}
\usepackage{amssymb}
\usepackage{amsthm}
\usepackage{bm}
\usepackage{bbm}
\usepackage{mathrsfs}
\usepackage{mathtools}
\usepackage[ruled]{algorithm2e}
\SetKwProg{Init}{Initialize:}{}{}
\SetKwInOut{Output}{Output}

\DeclarePairedDelimiter{\abs}{\lvert}{\rvert}
\DeclarePairedDelimiter{\norm}{\|}{\|}

\DeclarePairedDelimiter{\bra}{\lbrack}{\rbrack}
\DeclarePairedDelimiter{\bre}{\lbrace}{\rbrace}
\DeclarePairedDelimiter{\para}{(}{)}
\DeclarePairedDelimiter{\dotp}{\langle}{\rangle}

\DeclareMathOperator{\E}{\mathsf{E}}

\newcommand{\R}[0]{\mathbb{R}}

\newtheorem{lemma}{Lemma}

\newtheorem{remark}{Remark}
\newtheorem{assumption}{Assumption}

%% Please note that the amsthm package must not be loaded with
%% IEEEtran.cls because IEEEtran provides its own versions of
%% theorems. Also note that IEEEXplore does not accepts submissions
%% with hyperlinks, i.e., hyperref cannot be used.

\interdisplaylinepenalty=2500 % As explained in bare_conf.tex

%%%%%%
% correct bad hyphenation here
\hyphenation{op-tical net-works semi-conduc-tor}

% ------------------------------------------------------------
\begin{document}
\title{Distributed and Rate-Adaptive Feature Compression} 

% %%% Single author, or several authors with same affiliation:
% \author{%
%  \IEEEauthorblockN{Andrew R.~Barron}
%  \IEEEauthorblockA{Department of Statistics and Data Science\\
%                    Yale University\\
%                    New Haven, CT, USA\\
%                    Email: andrew.barron@yale.edu}
% }

%%% Several authors with up to three affiliations:
\author{%
  \IEEEauthorblockN{Aditya Deshmukh}
  \IEEEauthorblockA{Department of Electrical\\ and Computer Engineering\\
                    University of Illinois Urbana-Champaign\\
                    Champaign, IL, USA\\
                    Email: ad11@illinois.edu}
  \and
  \IEEEauthorblockN{Venugopal V. Veeravalli}
  \IEEEauthorblockA{Department of Electrical\\ and Computer Engineering\\
                    University of Illinois Urbana-Champaign\\
                    Champaign, IL, USA\\
                    Email: vvv@illinois.edu}
\and
  \IEEEauthorblockN{Gunjan Verma}
  \IEEEauthorblockA{Army Research Laboratory\\ 
                    Adelphi, MD, USA\\
                    Email: gunjan.verma.civ@army.mil}
}

%%% Many authors with many affiliations:
% \author{%
%   \IEEEauthorblockN{Andrew R.~Barron\IEEEauthorrefmark{1},
%                     Claude E.~Shannon\IEEEauthorrefmark{2},
%                     David Slepian\IEEEauthorrefmark{2},
%                     and Jacob Ziv\IEEEauthorrefmark{2}\IEEEauthorrefmark{3}}
%   \IEEEauthorblockA{\IEEEauthorrefmark{1}%
%                    Department of Statistics and Data Science, Yale University, New Haven, CT, USA,
%                     andrew.barron@yale.edu}
%   \IEEEauthorblockA{\IEEEauthorrefmark{2}%
%                     Bell Telephone Laboratories, Inc.,
%                     Murray Hill, NJ, USA,
%                     \{csh,dsl,jz\}@bell-labs.com}
%   \IEEEauthorblockA{\IEEEauthorrefmark{3}%
%                     Department of Electrical Engineering, Technion---Institute of Technology, Haifa, Israel,
%                     jz@ee.technion.ac.il}
% }

\maketitle

%%%%%%
%% Abstract: 
%% If your paper is eligible for the student paper award, please add
%% the comment "THIS PAPER IS ELIGIBLE FOR THE STUDENT PAPER
%% AWARD." as a first line in the abstract. 
%% For the final version of the accepted paper, please do not forget
%% to remove this comment!
%%

\begin{abstract}
We study the problem of distributed and rate-adaptive feature compression for linear regression. 
A set of distributed sensors collect disjoint features of regressor data. A fusion center is assumed to contain a pretrained linear regression model,
trained on a dataset of the entire uncompressed data. At inference time, the sensors compress their observations and send them to the fusion center through communication-constrained channels, whose rates can change with time. Our goal is to design a feature compression {scheme} that can adapt to the varying communication constraints, while maximizing the inference performance at the fusion center. We first obtain the form of optimal quantizers assuming knowledge of underlying regressor data distribution.
Under a practically reasonable approximation, we then propose a distributed compression scheme
which works by quantizing a one-dimensional projection of the sensor data. We also propose a simple adaptive scheme for handling changes in communication constraints.  
We demonstrate the effectiveness of the distributed adaptive compression scheme through simulated experiments.
\end{abstract}

\section{Introduction}
A prevalent way in which machine learning models are trained involves collecting data from various relevant sources, and training the models on the aggregated data. However, in many applications, the input data is often collected from distributed sources at inference time. Examples include, the Internet of Things (IoT) networks, security systems with surveillance sensors, and driverless cars collecting data from sensors and receiving data from wireless receivers. In these applications, the volume of data is generally high and decisions are time-sensitive, and so it is important to have low latency. Moreover, when the data is being communicated through wireless channels, bit-rates can be quite low  either for energy conservation purposes or because of poor channel conditions. Thus, it is imperative to optimize the data-stream pipelines in order to provide maximum information relevant to the performance of the downstream task. Moreover, in practice these pipelines are also subject to changes in bit-rates, and so it is necessary for the solutions to be adaptive to these changes. In this work, we try to answer the following question:

\quad\textit{How to maximize information (relevant to the downstream task) received at a pretrained model at inference time when input data is collected in a distributed way through communication-constrained channels that are subject to change?}

We consider a distributed sensor network that consists of $m$ sensors and a fusion center. The sensors collect multi-modal observations, compress and quantize them, and send them to the fusion center through communication-constrained channels. The fusion center contains a learning model, which is pretrained on a training dataset of the uncompressed multi-modal data. We consider the goal of designing efficient feature compressors that can adapt to dynamic communication constraints, while maximizing inference performance at the fusion center. In order to design the compressors, we assume that we have access to a \emph{calibration} dataset, which may  be a subset of the dataset on which pretrained model is trained.

There is considerable literature on distributed compression for detection and estimation when the underlying sensor data distributions (or families of distributions) are known (see, e.g., ~\cite{Tsitsiklis93,luo2005universal, zhang2013information}). In this work, we do not assume knowledge of the sensor data distributions. 
% \vvv{What about the literature on distributed estimation?}

% Recently, there has been paradigm-shift from conventional communication to \textit{goal-oriented communication} (see, e.g.,~\cite{zhu2020toward,strinati20216g}), particularly when the goal is inference through machine learning models. Many of these works do not assume knowledge of underlying data distributions and propose data-driven techniques. \vvv{I have no idea what goal-oriented communication is and why it is relevant}

Recently, a line of works have studied the problem of designing distributed compression/quantization schemes for machine learning. A customized quantization scheme for diagonal linear discriminant analysis in a distributed sensor setting was proposed by ~\cite{du2016novel}.~\cite{hanna2020distributed} showed that the problem of designing optimal distributed feature compression schemes is NP-hard and proposed deep neural-network (DNN) based solutions for the task of classification.~\cite{shao2022task} propose a task-relevant feature extraction framework based on the information bottleneck principle, and adopt the distributed information bottleneck framework to formalize a single-letter characterization of the optimal rate-relevance trade-off for distributed feature encoding.~\cite{shao2022task} also proposed DNN-based solutions for their proposed framework. While these works demonstrate that inference performance can be maintained under high levels of compression,  the proposed schemes are not adaptive, i.e., if the communication constraints change, the proposed compression schemes need to be re-trained, which may be impractical for delay-sensitive applications.

Our contributions are as follows:
\begin{enumerate}

    \item We propose a framework for designing optimal compression schemes when the pretrained model is a linear regressor, which also extends to the general learning model case. To the best of our knowledge, this is the first work which analyzes distributed compression schemes for distributed linear regression. Assuming knowledge of the underlying data distribution, we characterize the structure of the optimal compression scheme, and address the difficulty in formulating its empirical version when the underlying data distribution is not known.
    \item Under mild assumptions, we obtain optimal compressors when the pretrained model is a linear regressor. We show that the optimal compressor works by quantizing a one-dimensional projection of the sensor data.
    \item We propose a simple adaptive scheme for handling changes in communication constraints.
    \item Based on the structure of the compression scheme for the linear regression model, we propose an adaptive compression scheme based on Vector-Quantized Variational Auto-Encoders (VQ-VAEs) (see~\cite{van2017neural}) for the case where there is a general pretrained model at the fusion center. Our proposed scheme is motivated by the fact that VQ-VAE based compression works by projecting the data onto a low-dimensional space. We further quantize the latent representations, which matches the compression scheme obtained for pretrained linear regressors.
    \item We show that the adaptive scheme renders re-training the VQ-VAEs unnecessary when the communication constraints change.
    \item We demonstrate the effectiveness of the distributed adaptive compression scheme through experiments.
\end{enumerate}

\begin{figure*}[t]
\centering
\includegraphics[width=0.75\linewidth]{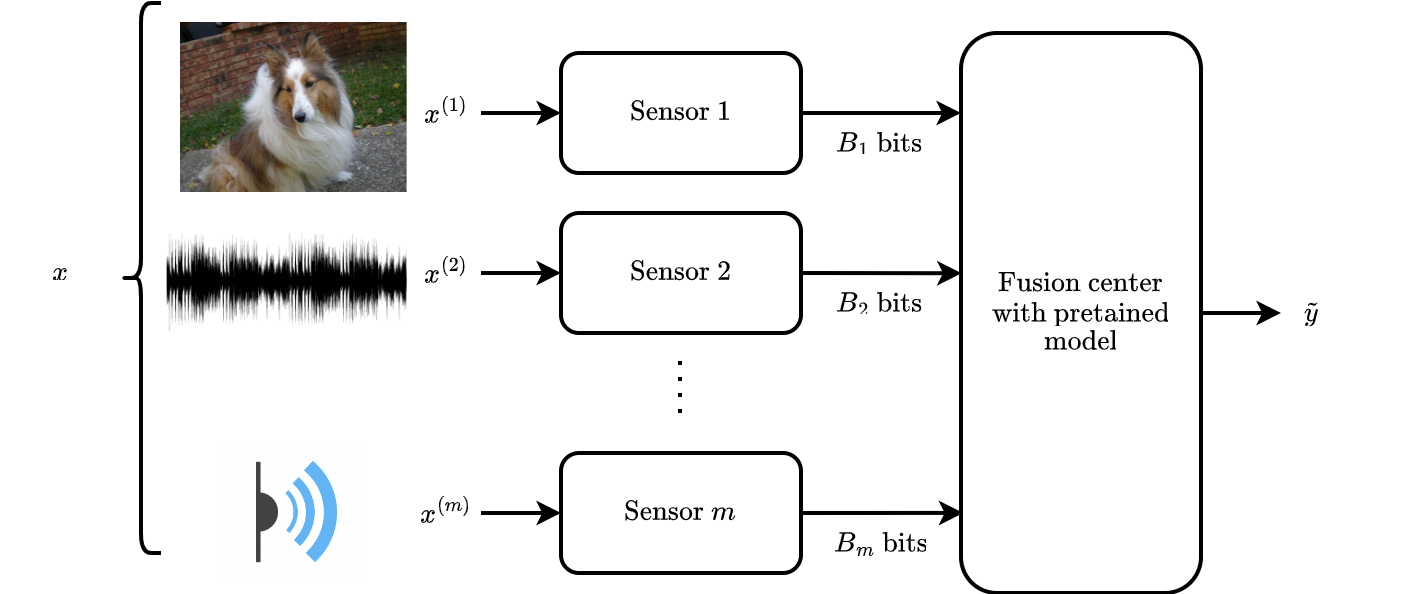}
\caption{Distributed system with multi-modality}
\end{figure*}

% \begin{figure*}[t]
% \centering
% \includegraphics[width=0.75\linewidth]{figs/plain_net.drawio.png}
% \caption{Distributed system with multi-modality}
% \end{figure*}

\section{Problem Setup}

We first consider the case where the pretrained model at the fusion center is a linear regressor.

Consider the following linear model:
\begin{align}
    \hat{y} = \dotp{x,\beta}
\end{align}
where $\hat{y}\in\R$ is the response variable, $x\in\R^d$ is the regressor, and $\beta\in\R^d$ is the learned parameter.

The sensors are assumed to observe different features of the regressor $x$. Let the $i$th sensor observe features indexed by $\mathcal{S}_i\subset [d]$. We assume that $\mathcal{S}_i$'s are non-empty and pairwise disjoint. Let $d_i=\abs{\mathcal{S}_i}$. We denote the observation of $i$th sensor by $x^{(i)}\in\R^{d_i}$:
\begin{align}
    x^{(i)} = [x_{r}, r\in \mathcal{S}_i].
\end{align}
Assume that the $i$th sensor quantizes its observation using $B_i$ bits (equivalently, $K_i=2^{B_i}$ levels), by using a quantizer of the form $Q_i:\R^{d_i}\to\bre*{c_1,\dots,c_{K_i}}$, where $c_k$'s are the quantization vectors (to be optimized) lying in $\R^{d_i}$. Let the quantized observation of the $i$th sensor be denoted as $\tilde{x}^{(i)}\in\R^{d_i}$:
\begin{align}
    \tilde{x}^{(i)}=Q_i(x^{(i)}).
\end{align}
Let the complete quantization output of the regressor $x$ through this distributed quantization scheme be denoted as $\tilde{x}$.

% \vvv{[There is a problem with the notation here. Above you use $x_r$ to denote the $r$-th component of the vector $x$, whereas here you use it to index different vectors.]} 
With slight abuse of notation, let the calibration data be denoted as $\bre*{x_j}_{j=1}^n$ (the individual components of $x_j$ will be denoted by $x_{j,r}$). Let $\hat{y}_j=\dotp{x_j, \beta}$ and $\tilde{y}_j=\dotp{\tilde{x}_j, \beta}$ be the estimated response for the unquantized and quantized regressors $x_j$ and $\tilde{x}_j$, respectively. We represent $n$ such equations in the following matrix form:
\begin{align}
    \hat{Y} = X\beta \quad\text{ and }\quad \tilde{Y} = \tilde{X}\beta
\end{align}
where $j$th row of the matrices $X$ and $\tilde{X}$ are $x_j$ and $\tilde{x}_j$, respectively. For analysis purposes, let $X^{(i)}$ and $\tilde{X}^{(i)}$ be the matrices with $j$th row as $x^{(i)}_j$ and ${\tilde{x}_j^{(i)}}$, respectively, $j=1,\dots,n$. Let the corresponding \textit{contributions} of $i$th sensor be denoted by
\begin{align}
    \hat{Y}^{(i)} = X^{(i)}\beta^{(i)} \quad\text{ and }\quad \tilde{Y}^{(i)} = \tilde{X}^{(i)}\beta^{(i)}
\end{align}
respectively.
Note that we denote random variables with a boldface font. A reasonable objective is to design the quantizers $Q_1,\dots,Q_m$ in order to minimize the MSE:
\begin{align}\label{eq:obj_linreg}.
    \min_{Q_1,\dots,Q_m} \E\bra*{(\tilde{\bm y}-\hat{\bm y})^2}
\end{align}
Here, the expectation is taken over the distribution of the input $x$, conditioned on the learned parameter $\beta$ being fixed. However, the underlying data distribution is unknown in most practical cases. Conventionally, the method of empirical risk minimization (ERM) is applied to approximately solve an objective of the form in \eqref{eq:obj_linreg}. However, as we argue in the following remark, the ERM approach does not provide a complete solution for the problem considered.

\begin{remark}The empirical form of the optimization problem in  \eqref{eq:obj_linreg} is
\begin{align}\label{eq:obj}
    &\min_{Q_1,\dots,Q_m} \frac{1}{n}\norm*{\tilde{Y}-\hat{Y}}_2^2\\
    =&\min_{Q_1,\dots,Q_m} \frac{1}{n}\norm*{\sum\limits_{i=1}^m\para*{\tilde{Y}^{(i)}-\hat{Y}^{(i)}}}_2^2.
\end{align}

Although it is possible to obtain a local minimum of this empirical objective through best-response dynamics by considering a potential game (see \cite{monderer1996potential}), the solution does not capture the entire form of the quantizers $Q_i(.)$, but only the provides us with the quantizers' outputs on the calibration data, i.e., $Q_i(x^{(i)}_j)$. Particularly, the formulation in \eqref{eq:obj} does not capture the constraint that at inference time, the $i$th quantizer has access to data only at the $i$th sensor.
\end{remark}

Hence, we first analyze the population-based objective given in \eqref{eq:obj_linreg}, and obtain the structure of the optimal quantizers in the following lemma.
%, whose proof is provided in the Appendix.

\begin{lemma}\label{lemma:1}
    The optimal quantizers for optimization problem in \eqref{eq:obj_linreg} simultaneously satisfy the following structure (which is provided for an arbitrary $i$th sensor): 
    \begin{align}
    Q_i(x^{(i)})=\bar{c}_k^{(i)},\;\; \text{ if } x^{(i)}\in \bar{\mathcal{D}}_k^{(i)}, \text{ for }k\in [K_i],
    \end{align}
    where $\bar{c}_k^{(i)}$ and $\bar{\mathcal{D}}_k^{(i)}$ are the minimizers of
    \begin{align}\label{eq:min_cD}    \min_{c^{(i)}_k,\mathcal{D}^{(i)}_k}\sum\limits_{k=1}^{K_i}\int\limits_{\mathcal{D}^{(i)}_k} f_i(c^{(i)}_k,x^{(i)})p_i(x^{(i)})d\mu_i(x^{(i)}),
    \end{align}
    where  $\mu_i$ is the Lebesgue measure on $\R^{d_i}$, $p_i$ is the probability density function of $\bm x^{(i)}$ with respect to $\mu_i$ and
\begin{align}\label{eq:lin_reg}
f_i(c^{(i)}_k,x^{(i)})=&\E\Bigg[\Bigg(\dotp{c^{(i)}_k-x^{(i)},{\beta}^{(i)}}+\\
    &\sum\limits_{i'\neq i}\dotp{\tilde{\bm x}^{(i')}-\bm x^{(i')},{\beta}^{(i')}}\Bigg)^2\Bigg|\bm x^{(i)}=x^{(i)}\Bigg]\nonumber.
\end{align}
    Note that $\tilde{\bm x}^{(i')}$s (where $i'\neq i$) are obtained similarly through optimal quantizers for sensors other than the $i$th sensor, i.e., $\tilde{\bm x}^{(i')}=Q_{i'}(\bm x^{(i')})$.
\end{lemma}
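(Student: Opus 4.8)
The plan is to prove this as a necessary (\emph{person-by-person}) optimality condition, in the spirit of the generalized Lloyd--Max conditions for vector quantization, adapted to the coupled distributed setting. The first step is to rewrite the objective in a form that isolates the contribution of a single sensor. Using linearity of the inner product, the pointwise error decomposes as $\tilde{\bm y}-\hat{\bm y}=\dotp{\tilde{\bm x}-\bm x,\beta}=\sum_{i=1}^m\dotp{\tilde{\bm x}^{(i)}-\bm x^{(i)},\beta^{(i)}}$, so that the objective in \eqref{eq:obj_linreg} becomes the expected square of this sum. I would then fix an arbitrary sensor $i$ and split the sum into its $i$th term and the remaining terms $\sum_{i'\neq i}\dotp{\tilde{\bm x}^{(i')}-\bm x^{(i')},\beta^{(i')}}$.

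The core step is to argue that, at any optimum $(Q_1,\dots,Q_m)$, the quantizer $Q_i$ must itself minimize the objective when the other quantizers $\{Q_{i'}\}_{i'\neq i}$ are held fixed; otherwise one could replace $Q_i$ by a strictly better map and decrease the objective, contradicting optimality. To carry out this single-sensor minimization, I would condition on $\bm x^{(i)}=x^{(i)}$ and invoke the tower property, $\E[(\tilde{\bm y}-\hat{\bm y})^2]=\E_{\bm x^{(i)}}\big[\E[(\tilde{\bm y}-\hat{\bm y})^2\mid \bm x^{(i)}]\big]$. The key observation is that, conditioned on $\bm x^{(i)}=x^{(i)}$, the output $\tilde{\bm x}^{(i)}=Q_i(x^{(i)})$ is deterministic, so the inner conditional expectation, with $Q_i(x^{(i)})$ substituted for the codeword, is exactly $f_i(Q_i(x^{(i)}),x^{(i)})$ as defined in \eqref{eq:lin_reg}. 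The remaining randomness (over $\bm x^{(i')}$, and hence over the $\tilde{\bm x}^{(i')}$, for $i'\neq i$) is absorbed into $f_i$.

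This rewrites the objective as $\int f_i(Q_i(x^{(i)}),x^{(i)})\,p_i(x^{(i)})\,d\mu_i(x^{(i)})$. Since any $Q_i$ of the prescribed form assigns a single codeword $c^{(i)}_k$ to each cell $\mathcal{D}^{(i)}_k$ of a $K_i$-cell partition of $\R^{d_i}$, splitting the integral over these cells yields precisely the sum in \eqref{eq:min_cD}, and minimizing over $Q_i$ is equivalent to minimizing over the codewords $c^{(i)}_k$ and cells $\mathcal{D}^{(i)}_k$. Choosing $(\bar{c}^{(i)}_k,\bar{\mathcal{D}}^{(i)}_k)$ as the minimizers gives the claimed structure for sensor $i$; since $i$ was arbitrary and the argument applies at the optimum for every sensor at once, the conditions hold \emph{simultaneously}.

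The main obstacle I anticipate is conceptual rather than computational: the characterization is a \emph{coupled} fixed point, because each $f_i$ depends, through the terms $\tilde{\bm x}^{(i')}=Q_{i'}(\bm x^{(i')})$, on the (optimal) quantizers of the other sensors. Thus the lemma gives a necessary structural condition satisfied simultaneously at an optimum rather than a decoupled closed-form solution, and one must be careful to state it as such. The only technical care needed is in the measure-theoretic justification: ensuring $f_i$ is well defined as a conditional expectation, that $Q_i(x^{(i)})$ may be treated as deterministic inside the conditioning, and that the interchange of expectation with the over-cells decomposition is valid (a mild integrability assumption on $\beta$ and the feature moments suffices).
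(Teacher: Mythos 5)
Your proposal is correct and follows essentially the same route as the paper's own proof: fix the other sensors' quantizers at their optimal values, apply the tower property conditioning on $\bm x^{(i)}$, observe that the conditional expectation equals $f_i(c^{(i)}_k,x^{(i)})$ on each cell $\mathcal{D}^{(i)}_k$, and split the integral over the partition to obtain \eqref{eq:min_cD}. Your explicit person-by-person exchange argument and the remark that the result is a coupled fixed-point condition are slightly more careful framings of what the paper leaves implicit, but the substance is identical.
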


\begin{proof} 
To obtain the structure of the optimal quantizers, we focus our attention on an arbitrary $i$th quantizer. Note that the quantizer $Q_i$ has the following general form:
\begin{align}
    Q_i(x^{(i)}) = c^{(i)}_{k}, \text{ if }x^{(i)}\in\mathcal{D}_k^{(i)}.
\end{align}
Fix quantizers $Q_1,\dots,Q_{i-1},Q_{i+1},\dots,Q_m$ to be the optimal quantizers for sensors other than $i$th sensor and consider the following objective:
\begin{align}
    &\min_{Q_i} \E\bra*{(\tilde{\bm y}-\hat{\bm y})^2}\\
    =&\min_{Q_i} \E\bra*{\E\bra*{(\tilde{\bm y}-\hat{\bm y})^2|\bm x^{(i)}}}\\
    =& \min_{c^{(i)}_k,\mathcal{D}^{(i)}_k}\sum\limits_{k=1}^{K_i}\int\limits_{\mathcal{D}^{(i)}_k}g_i(x^{(i)})p_i(x^{(i)})d\mu_i(x^{(i)})\label{eq:gp}
\end{align}
where
\begin{align}
    &g_i(x^{(i)})\\
    =&\mathsf{E}\bra*{(\tilde{\bm y}-\hat{\bm y})^2|\bm x^{(i)}=x^{(i)}}\\
    =&\mathsf{E}\bra*{\para*{\sum\limits_{j=1}^m\dotp{\tilde{\bm x}^{(j)}-\bm x^{(j)},\beta^{(j)}}}^2\Bigg|\bm x^{(i)}=x^{(i)}}\\
    =&\E\Bigg[\Bigg(\dotp{Q_i(x^{(i)})-x^{(i)},{\beta}^{(i)}}+\label{eq:end_gp}\\
    &\sum\limits_{i'\neq i}\dotp{\tilde{\bm x}^{(i')}-\bm x^{(i')},{\beta}^{(i')}}\Bigg)^2\Bigg|\bm x^{(i)}=x^{(i)}\Bigg]\nonumber
\end{align}
Note that $g_i(x^{(i)})=f_i(c^{(i)}_k,x^{(i)})$ when $x^{(i)}\in\mathcal{D}_k^{(i)}$, where $f_i(c^{(i)}_k,x^{(i)})$ is as defined in \eqref{eq:lin_reg}. Hence, \eqref{eq:gp}--\eqref{eq:end_gp} can be written as \eqref{eq:min_cD}--\eqref{eq:lin_reg}.
\end{proof}

\begin{remark}
Assuming knowledge of the underlying distribution of $\bm x$, it is possible to find a local minimum of the optimization problem in \eqref{eq:obj_linreg} through best-response dynamics, which involves solving \eqref{eq:min_cD} by applying the generalized Linde-Buzo-Gray (LBG)algorithm (see~\cite{linde1980algorithm}). However, it is not so straightforward to solve \eqref{eq:min_cD} empirically when the distribution of $\bm x$ is not known.
\end{remark}

We consider a practically reasonable approximation that the optimal quantization error at sensor $i'\neq i$ has zero mean conditioned on the observation at sensor $i$.
\begin{assumption}
    \begin{align}
        \mathsf{E}\bra*{\sum\limits_{i'\neq i}\tilde{\bm x}^{(i')}-\bm x^{(i')}\Big|\bm x^{(i)}=x^{(i)}}\approx 0.
    \end{align}
\end{assumption}

Under Assumption 1, we can express $f_i(c^{(i)}_k,x^{(i)})$ as 
\begin{align}
    &f_i(c^{(i)}_k,x^{(i)})=\dotp{c^{(i)}_k-x^{(i)},{\beta}^{(i)}}^2+\nonumber\\
    &2\dotp{c^{(i)}_k-x^{(i)},{\beta}^{(i)}}\mathsf{E}\bra*{\sum\limits_{i'\neq i}\dotp{\tilde{\bm x}^{(i')}-\bm x^{(i')},{\beta}^{(i')}}\Bigg|\bm x^{(i)}=x^{(i)}}\nonumber\\
    &+\mathsf{E}\bra*{\para*{\sum\limits_{i'\neq i}\dotp{\tilde{\bm x}^{(i')}-\bm x^{(i')},{\beta}^{(i')}}}^2\Bigg|\bm x^{(i)}=x^{(i)}}\label{eq:12}\\
    &\approx\dotp{c^{(i)}_k-x^{(i)},{\beta}^{(i)}}^2+0+r_i(x^{(i)}),
\end{align}
where the third term in \eqref{eq:12} is denoted as $r_i(x^{(i)})$.  The second term in \eqref{eq:12} is taken to be 0 due to the approximation considered. The minimizers in \eqref{eq:min_cD} are now equivalent to the minimizers in
 \begin{align}\label{eq:min_cD_2}    \mathop{\arg\min}_{c^{(i)}_k,\mathcal{D}^{(i)}_k}\sum\limits_{k=1}^{K_i}\int\limits_{\mathcal{D}^{(i)}_k} f'_i(c^{(i)}_k,x^{(i)})p(x^{(i)})d\mu(x^{(i)})
    \end{align}
    where
\begin{align}
f'_i(c^{(i)}_k,x^{(i)})&=\dotp{c^{(i)}_k-x^{(i)},{\beta}^{(i)}}^2\\
    &=\para*{\dotp{c^{(i)}_k,{\beta}^{(i)}}-\dotp{x^{(i)},{\beta}^{(i)}}}^2.
\end{align}
Observe that the conditional expectation in \eqref{eq:lin_reg} makes formulating a empirical version of \eqref{eq:min_cD} difficult. However, the objective in \eqref{eq:min_cD_2} is devoid of the conditioning, and hence we can formulate an empirical version of it through a K-Means clustering method with following assignment objective:
\begin{align}
\mathop{\arg\min}_{a^{(i)}}\sum\limits_{k=1}^{K_i}\sum\limits_{j:a^{(i)}_j=k}\para*{\dotp{c^{(i)}_k,{\beta}^{(i)}}-\dotp{x_j^{(i)},{\beta}^{(i)}}}^2
\end{align}
where
\begin{align}
c_k^{(i)} = \frac{1}{|j:a^{(i)}_j=k|}\sum\limits_{j:a^{(i)}_j=k} x^{(i)}_j.
\end{align}
Note that we can express the above K-Means objective as a one-dimensional K-Means objective with the projected data $\hat{y}^{(i)}$:
\begin{align}
\mathop{\arg\min}_{a^{(i)}}\sum\limits_{k=1}^{K_i}\sum\limits_{j:a^{(i)}_j=k}\para*{h^{(i)}_k-\hat{y}_j^{(i)}}^2
\end{align}
where
\begin{align}
h_k^{(i)} = \frac{1}{|j:a^{(i)}_j=k|}\sum\limits_{j:a^{(i)}_j=k} \hat{y}^{(i)}_j.
\end{align}

 This suggests a simple algorithm (Algorithm \ref{alg:lin_reg}) to solve the empirical form of \eqref{eq:obj_linreg}: 1) for each sensor, run a one-dimensional K-means algorithm (see~\cite{gronlund2017fast}) on the projected calibration data to obtain cluster assignments, 2) obtain cluster centers of sensor calibration data using the cluster assignments, 3) set quantizers which output the nearest cluster center based on the projected sensor data at inference time.

\begin{algorithm}[ht]\label{alg:lin_reg}
\caption{Distributed quantization for linear regression}
\KwData{Calibration dataset $\{\bm x_j\}_{j=1}^n$, learned parameter $\beta$, number of clusters $\{K_i\}_{i=1}^m$.}
\Init{\upshape Compute $\hat{Y}^{(i)}=X^{(i)}\beta^{(i)}$.}{}
\For{$i=1$ to $m$}{
$\bm a^{(i)}=$ 1D-K-Means$\para*{\hat{Y}^{(i)},K_i}$
}
Set $\bm{c}_k^{(i)} = \frac{1}{\abs{j:a^{(i)}_j=k}} \sum\limits_{j:a^{(i)}_j=k} x_j^{(i)},\, k\in[K_i], i\in [m]$.\\
Set $\mathcal{D}^{(i)}_k$ as the set:\\ $\bre*{
x^{(i)}: \dotp{c^{(i)}_k-x^{(i)},{\beta}^{(i)}}^2\leq \dotp{c^{(i)}_{k'}-x^{(i)},{\beta}^{(i)}}^2, \forall k'\neq k
}$\\
% \Output{Cluster representatives and cluster regions for $i$th sensor: $\{c^{(i)}_k\}_{k=1}^{K_i}$, $\{D^{(i)}_k\}_{k=1}^{K_i}$ for every $i\in[m]$.}
\Output{Quantizers: $Q_i(x^{(i)})=c_k^{(i)}$, if $x^{(i)}\in\mathcal{D}_k^{(i)}$.}
\end{algorithm}

% Observe that the algorithm suggests a simpler form of quantizer: project data onto $\beta^{(i)}$ and obtain $z_j^{(i)}=\dotp{\bm x_j^{(i)}, \beta^{(i)}}$, use the cluster centers obtained through 1D-Kmeans on the projected calibration data for nearest neighbor quantization and at inference time send the quantized projected data: $\tilde{z}_j$. The fusion center then outputs the response $\tilde{y}=\sum\limits_{i=1}^m\tilde{z}_j^{(i)}$.

\subsection{Adaptive Quantization}\label{sec:quant}
We consider the scenario wherein the communication constraints between the sensors and the fusion center are susceptible to change. In this case, we propose Algorithm \ref{alg:adap} to cluster the cluster representatives obtained by the quantizers using Weighted-K-Means algorithm to reduce the number of cluster representatives, and thereby reduce the number of bits used. 

\begin{algorithm}\label{alg:adap}
\caption{Adaptive K-Means}
\KwData{Cluster centers $\mathcal{C}=\{c_k\}_{k=1}^K$, number of datapoints assigned to each cluster $\mathcal{W}=\{n_k\}_{k=1}^K$ and new number of clusters $K'\leq K$.} 
Obtain $\bm a$ = Weighted-K-Means$(\mathcal{C},\mathcal{W},K')$.\\
Set $\bm{c'}_{i} = \frac{1}{\sum\limits_{k:a_k=i}n_k} \sum\limits_{k:a_k=i} n_kc_k, \text{ where } i\in[K'].$\\
\Output{New cluster centers $\{c'_i\}_{i=1}^{K'}$.}
\end{algorithm}

\subsection{Experimental Results}
We generated a synthetic calibration dataset as follows: we generate $n=10000$ datapoints in $\R^{100}$ from multivariate Gaussian with a mean and a covariance matrix that were randomly generated and fixed. The parameter $\beta\in\R^{100}$ is fixed to a realization of a random vector with independent normally distributed entries. The responses of the model on the calibration data were computed as $\hat{y}=X\beta$. We considered 10 sensors with each sensor observing 10 features. For various choices of bits/sensor we ran  Algorithm~\ref{alg:lin_reg} to obtain the proposed (non-adaptive) quantizers. For the same choices of bits/sensor, we also obtain adaptive quantizers, in which new cluster centers were obtained for fewer than 10 bits, by running Algorithm \ref{alg:adap} with the cluster centers corresponding to the highest number of bits, i.e., 10 obtained by  Algorithm~\ref{alg:lin_reg}. We also compared the performance of the non-adaptive and adaptive strategies with the following strategy for quantization at the sensors, which is agnostic to the pretrained model at the fusion center: for $i$th sensor, use K-Means clustering with the sensor calibration data $X^{(i)}$ and $K_i$ number of clusters. Figure \ref{fig:lin_reg} shows the MSE values obtained by the non-adaptive strategy, the proposed adaptive strategy, and the naive strategy.  The MSE values were computed by Monte-Carlo simulation with 100,000 test datapoints. Number of bits assigned to each sensor was varied form 1 to 10. We make the following observations: 1) the proposed adaptive strategy  performs as well as the non-adaptive strategy, and 2) the proposed adaptive strategy had an order of magnitude improvement in MSE performance over the naive strategy for moderate number of bits. We observed similar trends across various synthetically generated models.

% \begin{table}[h!]
% \centering
% \begin{tabular}{ |c|c|c|c| } 
% \hline
% \#bits/sensor & Proposed strategy & Naive strategy \\
% \hline
% 1 & 869.87 & 869.87 \\ 
% 2 & 276.75 & 329.26 \\ 
% 3 & 73.48 & 143.97 \\ 
% 4 & 26.26 & 93.40 \\ 
% 5 & 11.39 & 71.29 \\ 
% \hline
% \end{tabular}
% \caption{MSE values obtained by proposed strategy vs naive strategy.}
% \label{table:1}
% \end{table}
\begin{figure}[ht]
\centering
\includegraphics[width=\linewidth]{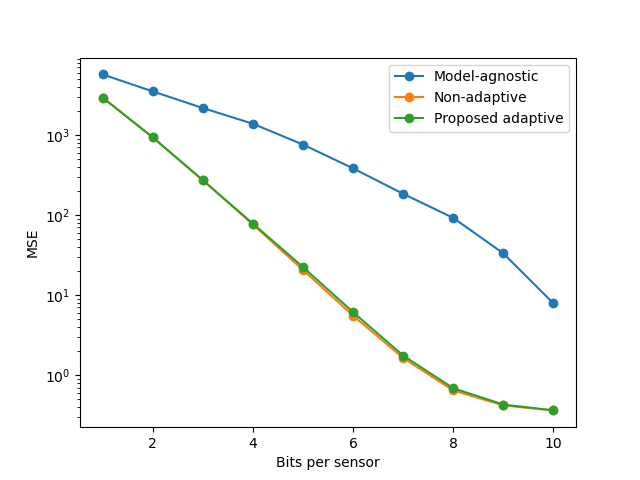}
\caption{MSE values obtained by the model-agnostic, non-adaptive, and proposed adaptive strategies.  \label{fig:lin_reg}} 

\end{figure}

\section{General Learning Model}
\begin{figure*}[t]
\centering
\includegraphics[width=0.75\linewidth]{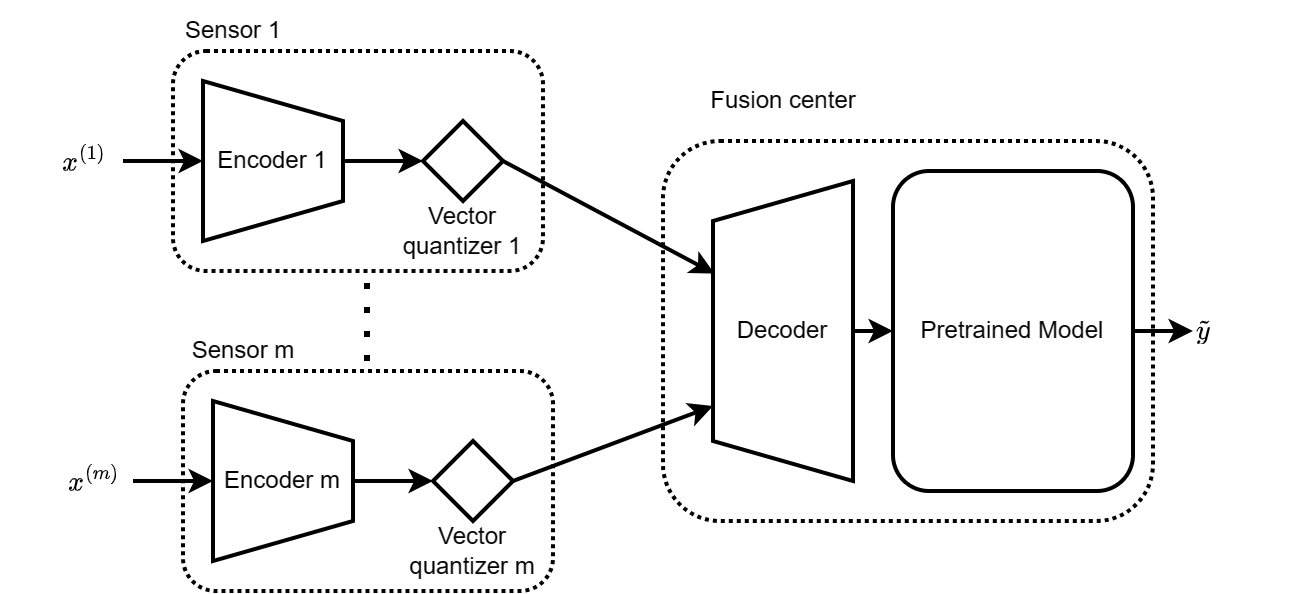}
\caption{VQ-VAE based distributed system}
\label{fig:vqvae}
\end{figure*}
We now consider a general pretrained learning model at the fusion center:
\begin{align}
    \hat{y} = f(x;\theta)
\end{align}
where $x\in\R^{d_\mathrm{in}}$ is the input, $\hat{y}\in\R^{d_\mathrm{out}}$ is the output, and $\theta$ denotes the learned parameters of the model. Let the loss function minimized during training be denoted by $\ell(\hat{y},y)$, where $y$ denotes the ground-truth label in the training dataset. It is assumed that the loss function $\ell$ is differentiable with respect to the input $x$. Following the conclusion of the previous section, we posit that the sensor observations should be projected (or encoded) first in a low-dimensional space and then quantized. The fusion center decodes the received quantized encoded observations and then applies the learned model to produce the output.

Note that the problem of designing optimal encoders and decoder is quite difficult for the general learning model. We propose using VQ-VAEs \cite{van2017neural} for this purpose. VQ-VAEs have shown great empirical success in compressing images and audios (for the purpose of reconstruction) in an unsupervised fashion, as well as being good generative models for generating images and audio signals. In this work, we demonstrate the effectiveness of VQ-VAEs in quantizing sensor inputs in the distributed setting, when the downstream task is not reconstruction, but a more general task captured by a pretrained model.

A VQ-VAE architecture comprises of an encoder $E$, a vector-quantizer equipped with a codebook and a decoder $D$. Let $\bm{z}_e(x)=E(x)$ be the latent representations generated by the encoder. Typically, the latent representation takes the form of a tensor in $\R^{h\times w\times d_l}$. We term the following values as \textit{pixels} of the latent representation:
\begin{align}[z_e(x)]_{i,j}\in\R^{d_l}.
\end{align}
The sizes of first two dimensions, $h$ and $w$, are fixed given the input size $d_{\text{in}}$ and the encoder. The last dimension $d_l$ is a hyper-parameter, which can be varied.  Let the codebook be represented by a set of $K$ vectors in $\R^{d_l}$: $\{c_1,\dots,c_K\}$. The vector quantizer acts along the last dimension of $z_e(x)$ and outputs $z_q(x)$:
\begin{align}
    &[z_q(x)]_{i,j} = c_{k_{i,j}},\\ \text{ where }&k_{i,j}\in\mathop{\arg\min}_{t\in [K]} \|[z_e(x)]_{i,j}-c_t\|_2.
\end{align}
Let the output of the decoder be $\tilde{x}=D(x)$. Typically,
VQ-VAEs are trained with the goal of minimizing reconstruction loss.

In the distributed setting, the VQ-VAEs components are split up. The encoders and vector-quantizers are located at the sensors. We use super-script $(i)$ to denote a quantity/component at $i$th sensor. We assume a general form of a decoder $D$ at the fusion center (see Figure \ref{fig:vqvae}). Note that this captures the case wherein there are different decoders corresponding to each sensor at the fusion center. The quantized latent representations $z_q^{(i)}(x^{(i)})$ are communicated to the fusion center using $B_i=\log_2(K_i)$ bits. The fusion center concatenates these quantized latent representations and feeds them into the decoder. The decoder outputs $\tilde{x}$, which is then fed into the pretrained model. Let the calibration data be denoted by $(x_j,y_j)$, $j=1,\dots,n$. The whole distributed compression system is trained end-to-end while fixing the pretrained model. Since our goal is associated with the downstream task, we propose the following loss function:
\begin{align}
    L &= \frac{1}{n}\sum\limits_{j=1}^n\bra*{\ell(\tilde{y}_j, y_j)+\frac{1}{m}\sum\limits_{i=1}^m L_{q,j}^{(i)}}\label{eq:loss}
\end{align}
where
\begin{align}
    L_{q,j}^{(i)} &=\beta_1\|\mathrm{sg}(z^{(i)}_e(x_j^{(i)}))-z^{(i)}_q(x_j^{(i)})\|^2_2\label{eq:L_q}\\
    &\hspace*{0.5in} + \beta_2\|z^{(i)}_e(x_j^{(i)}-\mathrm{sg}(z^{(i)})_q(x_j^{(i)})\|^2_2\nonumber\\
    \tilde{y} &= f(\tilde{x};\theta)\\
    \tilde{x} &= D(z_q(x))\\
    z_q(x) &= \mathrm{concat}(z_q^{(1)}(x^{(1)}),\dots,z_q^{(m)}(x^{(m)}))\label{eq:concat}.
\end{align}
%\vvv{Note that $y_j$ can be the model generated target values from the calibration data, i.e., $y_j=f(x_j;\theta)$, in case of absence of ground-truth target values in the calibration data.}
In \eqref{eq:L_q}, $\mathrm{sg(.)}$ refers to the stop-gradient operation which blocks gradients from flowing into its argument, and the concat function in \eqref{eq:concat} is the concatenation of the components into a single tensor/vector. Note that the loss terms on the RHS of \eqref{eq:L_q} are the codebook learning loss and the commitment loss as described in~\cite{van2017neural}.

\subsection{Adaptive Compression}
The distributed system is initially trained with the highest number of bits per sensor. The encoders and decoder(s) obtained are then fixed, and the codebooks of the vector quantizers are stored at the corresponding sensors and the fusion center. At inference time, when the communication constraints change, new codebooks are obtained in the following adaptive manner, keeping the encoders and decoder(s) fixed. We use the same approach as proposed in Section \ref{sec:quant} to obtained a codebook of reduced size, i.e., we use Algorithm \ref{alg:adap} to cluster the cluster representatives in the stored codebooks using a weighted K-Means method to 
%reduce the number of cluster representatives, and thereby 
reduce the number of bits used. 
%We demonstrate significant preservation of performance when adaptive quantization is employed. 
Experimental results are provided in the next section.

\subsection{Experiments}

% The number of bits used by VQ-VAEs are quite high compared to approaches in works by~\cite{hanna2020distributed} and~\cite{shao2022task}. However, these approaches are non-adaptive and VQ-VAEs facilitate adaptive compression schemes, and so we do not compare the performance against these works. \vvv{I don't think you need to say that we use more bits than the other works. We are not sure about this, especially wrt Hann et al since we never really got their code to work.}
We evaluate the proposed VQ-VAE based distributed compression schemes on two datasets: MNIST Audio+Image and CIFAR-10 with the downstream task of classification. We do not compare the performance of the proposed schemes with a model-agnostic compression scheme, e.g., K-Means clustering on the sensor data, since model-agnostic compression schemes are prohibitively computationally expensive due to the large size of sensor data in practical applications, and are known to perform poorly. The hyper-parameters $\beta_1$ and $\beta_2$ in \eqref{eq:L_q} were set to 0.4 and 0.1 respectively. For the proposed adaptive scheme, we used uniform weights while applying weighted K-Means method. All experiments were run on a single NVIDIA GTX-1080Ti GPU with PyTorch.
\begin{figure}[ht]
\centering
\includegraphics[width=0.75\linewidth]{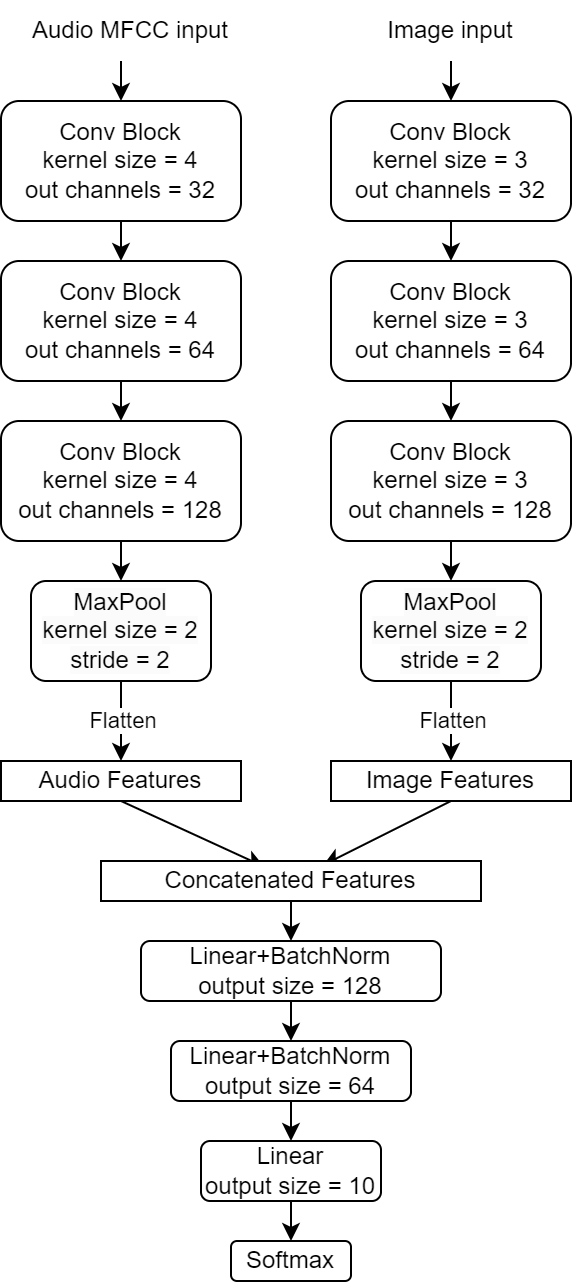}
\caption{DNN architecture for MNIST Audio+Image experiment\label{fig:dnn}}
\end{figure}
\subsubsection{MNIST Audio and Image}
We considered a pretrained DNN (with accuracy 99\%) trained on the MNIST Audio and Image dataset\footnote{DOI 10.5281/zenodo.3515934}. Figure \ref{fig:dnn} shows the architecture of the DNN employed. We considered 2 sensors, one of which senses the image and the other senses the corresponding audio. We considered separated decoders for the sensors at the fusion center. The architectures of the encoders and the decoders were set to be as proposed by~\cite{van2017neural}, with no modifications. The size of the dimension of codebook vectors was fixed to be $d_l^{(i)}=8$ for each sensor. The distributed system was trained using SGD with learning rate $5\times10^{-3}$ for 100 epochs, with a batch size of 256. Figure \ref{fig:mnist} shows the performance of the non-adaptive scheme (VQ-VAEs trained with the exact number of bits specified in the communication constraints) and the proposed adaptive scheme when number of bits/pixel of latent representation per sensor is varied from 1 to 10. We see that the performance is unchanged until number of bits is reduced by 80\%, which clearly illustrates the effectiveness of our adaptive quantization strategy. %\vvv{Need to also say something about the non-adaptive strategy.}

\begin{figure}[ht]
\centering
\includegraphics[width=\linewidth]{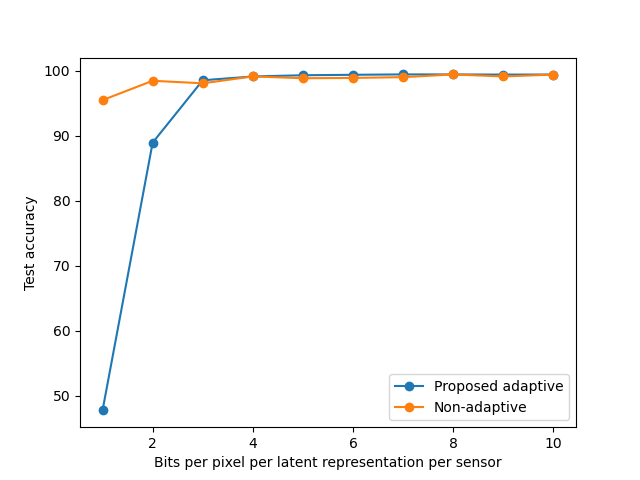}
\caption{Performance of the non-adaptive and  proposed adaptive   strategies on the MNIST Audio+Image dataset. \label{fig:mnist} 
%\vvv{add non-adaptive}
}
\end{figure}

\begin{figure}[ht]
\includegraphics[width=\linewidth]{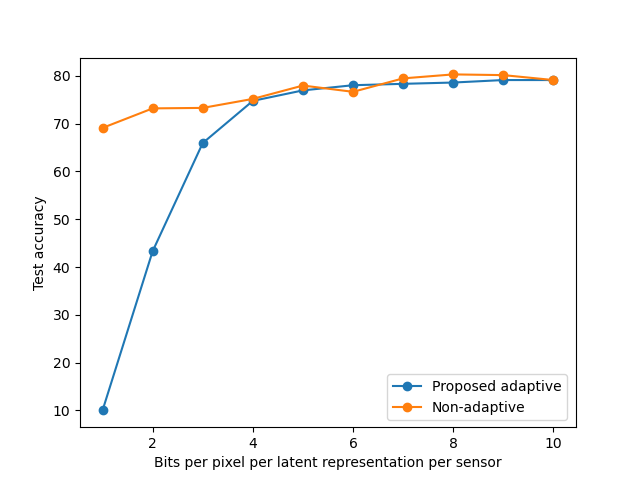}
\caption{Performance of the non-adaptive and  proposed adaptive   strategies on CIFAR 10 dataset.\label{fig:cifar1}%\vvv{remove strategy from legend}
}
\end{figure}

\subsubsection{CIFAR-10}
We considered a pretrained VGG-13 classifier (see~\cite{vgg}) trained on CIFAR-10 dataset (\cite{krizhevsky2009learning}), with accuracy 94\%. We considered 4 sensors, each of which observes a quadrant of an image. We assumed a common decoder at the fusion center. The size of the dimension of codebook vectors was fixed to be $d_l^{(i)}=32$ for each sensor. The architectures of the encoders and the decoder were set to be as proposed by~\cite{van2017neural}, with no modifications. The distributed system was trained using the Adam optimizer (see~\cite{adam}) with learning rate $10^{-4}$ for 1000 epochs, with a batch size of 256. Figure \ref{fig:cifar1} shows the performance of the non-adaptive strategy (trained with the exact number of bits specified in the communication constraints) and the proposed adaptive scheme. The number of bits/pixel of the latent representation per sensor is varied from 1 to 10. We observe that the proposed adaptive scheme performs nearly as well as the non-adaptive scheme until the number of bits is reduced by 70\%.

\section{Conclusion}
We studied the problem of distributed and adaptive feature compression in a distributed sensor network, wherein a set of sensors observe disjoint multi-modal features, compress them, and send them to a fusion center containing a pretrained learning model for inference for a downstream task. To gain insight, we first analyzed the case where the pretrained model is a linear regressor. Assuming knowledge of the underlying data distribution, we obtained structure of the optimal compression scheme. Under a practically reasonable approximation, we leverage the aforementioned structure to develop an algorithm that does not require knowledge of underlying data distribution. We showed that the algorithm works by quantizing a one-dimensional projection of the sensor data. We also proposed an adaptive strategy to handle changes in communication constraints. Experimentally, we demonstrated that the proposed adaptive algorithm significantly outperforms a model-agnostic quantization strategy, in which the sensor observations are quantized without regard to the pre-trained model at the fusion center. For the case when the pretrained model is a general learning model, we proposed a VQ-VAE based compression scheme, which is motivated by the fact that VQ-VAE based compression works by projecting the data onto a low-dimensional space. We further quantized the latent representations, guided by the compression scheme obtained for pretrained linear regressors. We further showed that the adaptive strategy proposed for case of linear regression can also be applied effectively to the VQ-VAE based compression scheme. We demonstrated the effectiveness of the VQ-VAE based distributed and adaptive compression scheme on MNIST Audio+Video and CIFAR10 datasets.

\bibliographystyle{ieeetr}
\bibliography{refs}
\end{document}